
\documentclass[letterpaper, 10 pt, conference]{ieeeconf}  

\IEEEoverridecommandlockouts                              
\overrideIEEEmargins

\usepackage{graphics} 
\usepackage{amsmath} 
\usepackage{amssymb}  

\usepackage{graphicx}

\usepackage{amsthm}
\usepackage{color}

\newcommand{\esh}{\widehat e}
\newcommand{\Jes}{J(e,\esh)}
\newcommand{\Jesh}{J^{\star}(\esh)}
\newcommand{\ps}{p^{\star}(e,\esh)}

\newtheorem*{problem*}{Problem Statement}







\newcommand{\mc}{\mathcal}
\newcommand{\mb}{\mathbb}
\newcommand{\mf}{\mathbf}



\newcommand{\T}{\intercal}







\newtheorem{proposition}{Proposition}
\newtheorem{lemma}{Lemma}
\newtheorem{theorem}{Theorem}


\title{\LARGE \bf
Incentivizing Truthful Reporting from Strategic Sensors \\
in Dynamical Systems
}


\author{Yuelin Zhao and Roy Dong
\thanks{Y. Zhao (corresponding author) is with the Department of Mechanical Science and Engineering, University of Illinois at Urbana-Champaign, Urbana, IL 61801, USA
        {\tt\small yuelinz3@illinois.edu}.}%
\thanks{R. Dong is with the Department of Electrical and Computer Engineering, University of Illinois at Urbana-Champaign, Urbana, IL, 61801, USA
        {\tt\small roydong@illinois.edu}.}%
}

\begin{document}

\maketitle
\thispagestyle{empty}
\pagestyle{empty}

\begin{abstract}
Human agents are increasingly serving as data sources in the context of dynamical systems. Unlike traditional sensors, humans may manipulate or omit data for selfish reasons. Therefore, this paper studies the influence of effort-averse strategic sensors on discrete-time LTI systems. In our setting, sensors exert costly effort to collect data, and report their effort to the system operator. However, sensors do not directly benefit from the output of the system, so they will not exert much effort to ensure accuracy and may even falsify their reported effort to maximize their utility. We explore payment mechanisms that incentivize truthful reporting from strategic sensors. We demonstrate the influence of the true and reported effort on the expected operational cost. Then, we use the realizations of the system cost to construct a payment function. We show that payment functions typically used in static settings will not be able to elicit truthful reports in general, and present a modified payment function that elicits truthful reporting, which requires terms that compensate for the dynamic impact of reported efforts on the closed-loop performance of the system.
\end{abstract}

\section{Introduction}
\label{sec:introduction}


Human agents are increasingly serving as data sources in the context of dynamical systems. However, unlike conventional sensors, human agents can manipulate or even falsify data to achiever their personal goals. The strategic behavior of data sources can have a significant impact on the control and operation of the system. 

For example, navigation apps like Google Maps and Waze combine data from real-world sensors and user data to estimate the traffic flow and congestion on road networks. However, when interacting with these apps, users are more interested in minimizing their own travel times rather than the overall traffic flows, so they may not put effort in reporting their data, or even falsify their data to achieve their personal goals. Such behavior can lead to worse traffic conditions. In this paper, we explore the design of incentives that help align the goals of strategic data sources and system operators.

In reality, data sources often incur an effort cost when obtaining high quality data. In this paper, we model our strategic sensors as effort-averse, i.e. all else equal, they will prefer to exert less effort and share lower quality data. 
Furthermore, system operators do not have access to data sources' private information. In particular, the system operators do not know the effort exerted by the data sources, nor the exact distribution from which their data is drawn. We explore how incentives can compensate for the problem of moral hazard that arises due to this asymmetric information~\cite{Nisan_game_theory_2007}.

The contribution of this paper is to study the influence of a single strategic sensor on the operational cost of a dynamical system. We decompose the operational cost into terms that depend on reported information and terms that depend on the hidden effort level. Given this decomposition, we show that payment mechanisms designed for static situations do not incentivize truthful reporting in our setting, and we provide a modified payment method that ensures truthful reporting for parameters that satisfy certain conditions.

In particular, we consider a discrete-time, linear, time-invariant system minimizing a quadratic cost on a finite time horizon. In our setting, strategic sensors report an effort level $\esh$ and exert an effort $e$, which may not necessarily equal $\esh$. We demonstrate the influence of $\esh$ and $e$ on the system cost and investigate payment methods that enforce truthful reporting, i.e. ensures that it is in the strategic sensor's best interest to choose $\esh = e$. 

The rest of this paper is organized as follows. Section~\ref{sec:background} shows the related literature and contextualizes our work. We present our model and formally outline the interaction between the operator and strategic sensor in Section~\ref{sec:model}. We define the control policy in Section~\ref{sec:sys_op}. In Section~\ref{sec:cost}, we analyze the system cost and expected system cost as a function of the true effort $e$ and reported effort $\esh$, providing a decomposition of the expected system cost $\mb{E}[J]$ into terms that only depend on $e$ and $\esh$. We use this decomposition to construct payment functions in Section~\ref{sec:payment}. Finally, we close with final remarks and present some avenues for future work in Section~\ref{sec:conclusion}.

\section{Background}
\label{sec:background}

Strategic data sources and the corresponding incentive mechanism design problem have received great attention in recent years. Most of literature on them fall into two categories, based on whether the application domain is static or dynamic.

Some of the existing work studies how to incentivize strategic data sources in a static setting, where there is no underlying dynamical system, and the data collectors are typically trying to estimate some underlying fixed function.
In~\cite{D.G_2016}, an optimal contract is presented that minimizes the total payment of the estimator while guaranteeing strategic sensors to put in sufficient effort and truthfully report the estimate, by assuming one of the sensors is loyal, i.e. reporting true information. \cite{Chen_2019} designs an optimal mechanism that minimizes the expected total compensation to the strategic sources while guaranteeing certain level of estimation accuracy. \cite{T.W_2019} models a data market with multiple data aggregators and multiple data sources, and shows that such coupling would lead to either infinite many equilibria or none, where all equilibria can be socially inefficient.~\cite{cai2014optimum} proposes an optimal mechanism of statistical estimators that minimizes the weighted sum of payments and estimation error, and shows that this mechanism is extremely robust, i.e, each data source's decision is a unique dominant strategy.~\cite{D.G_2017} studies the estimation problem in a repeated setting and designs a compensation scheme that employs stochastic data verification and builds a reputation history for each sensor. 

More recently, some literature has explored the impact of strategic sensors and payment design in the context of stochastic dynamical systems.~\cite{Teixeira_2015} studies strategic stealthy false-data injection attacks problem on discrete time linear systems. The author provides the necessary and sufficient conditions for the sensitivity metric to be unbounded, and further proposed a novel attack policy in~\cite{Teixeira_2019}. ~\cite{Pedro_2019} designs a mechanism for dynamical systems that ensures optimal control, and truthfully reporting forms a Nash equilibrium among the strategic agents. ~\cite{KeMa_2019} introduces an extension of Vickrey-Clarke-Groves (VCG) mechanism that guarantees Incentive Compatibility and truthful reporting of strategic agents in a linear-quadratic-Gaussian (LQG) dynamical systems. By carefully constructing a sequence of layered VCG payments, such mechanism can also ensure that the mechanism is budget-balanced and satisfies individual rationality under certain conditions. 

Most of this previous work assumes some sort of adversarial intent or some stake in the system operation; in contrast, our work focuses primarily on an effort-averse data source. Additionally, most of these works require multiple agents to enforce truthfulness amongst each other; in contrast, we consider the case of a single strategic sensor, and use the dynamics of the system itself to enforce truthfulness.

The terms and setting in this paper are closely related to~\cite{T.W_2019}. Our contribution is to introduce the effort-averse behavior of strategic data sources in a dynamic setting. In our problem, strategic sensors can exert a costly effort $e$ to reduce their measurement noise. Because $e$ is a private information to strategic sensors, we allow sensors to report their effort level as $\esh$ before they exert $e$. That is, the system uses the control law designed with $\esh$ and incurs the true measurement noise with $e$. However, strategic sensors do not directly benefit from the system performance, and they can even falsity $\esh$ to maximize their profit. Therefore, an appropriate payment method is desired to ensure truthfully reporting from sensors. 

Given that the strategic sensors truthfully report, the problem reduces to a classical control problem: minimizing the expected quadratic cost of an LQG system.
To our best of our knowledge, this is one of the first papers studying the impact of effort-averse strategic sensors on the system costs in the LQG setting. The reported noise covariances are used to design our controllers, and we separate out the influence of the true noise covariance  and the reported noise covariances on the closed-loop performance of our system. Using information available to a system operator, we design payments that incentivize truthful reporting of the noise covariances. 


%


\begin{table*}[h]
    \centering
    \begin{tabular}{|c|l|c|}
        \hline
        \textbf{Notation} & \textbf{Meaning} & \textbf{Defined or First Used in
        Equation} \\\hline
        $A, B, C, C^r, C^s$ 			& System dynamics 					& \eqref{eq:linear_system} \\
        $w_k, v_k, v_k^r, v_k^s$ 		& Process noise and observation noise 	& \eqref{eq:linear_system} \\
        $\sigma^2(e)$ 				& The effort-to-variance mapping 		& \eqref{eq:effort-covariance} \\
        $u(p,e)$ 					& The strategic sensor's utility function 	& \eqref{eq:sensor_util} \\
        $N$ 						& The time horizon of optimization 		& \eqref{eq:def_J} \\
        $J, Q, R$ 					& The system operator's cost function and cost function parameters	& \eqref{eq:def_J} \\
        $\Bar x, \Bar A, \Bar B$ 		& Augmented, closed-loop state variables and dynamics 			& \eqref{eq:bar_defs1} \\
        $\mathbf{\Bar A}, \mathbf{\Bar B}, \mathbf{\Bar V}$ & Iterated matrix applications 	& \eqref{eq:augment_system} \\
        $\mathbf{\bar{Q}_k}$ 		& Quadratic cost for the augmented system 	& \eqref{eq:bold_q} \\
        $\mf{\bar{\Sigma}}_k(e), \mf{\bar{\Sigma}}_{k,1},\mf{\bar{\Sigma}}_{k,2}$ 	& Covariance terms for the augmented noise 		& \eqref{eq:bold_sigk} \\
        $\mathbf{\Sigma}_1, \mathbf{\Sigma}_2$ 		& Initial augmented covariance terms 	& \eqref{eq:var_augment_system} \\
        $f_1(\esh), f_2(\esh)$ 		& The decomposition of $\mb{E}[\Jes]$ 			& \eqref{eq:f_i} \\
        $J^*(e)$ 					& The expected cost under truthful reporting 		& \eqref{eq:jstar} \\
        \hline
    \end{tabular}
    \caption{Notation Reference Chart}
    \label{tab:notation}
\end{table*}

\section{Model}
\label{sec:model}


In this section, we present our model. We look at a finite time horizon control problem for a linear, time-invariant (LTI) system in discrete time. At a high level, a system operator wishes to calculate a sequence of control inputs $(u_k)_k$ which minimizes their operational cost. Doing so, they have access to readings from two types of sensors. The regular sensor provides measurements with a known covariance $(y_k^r)_k$. The strategic sensor exerts some level of effort $e$ to provide measurements $(y_k^s)_k$; the level of effort $e$ affects the covariance of the measurements, but neither are known by the system operator. The system operator must issue incentives to the strategic sensor to ensure that the operating costs are low. 
The overall flow of information and control is summarized in Figure~\ref{fig:block_diagram}. For convenience, we've also compiled all the named variables in Table~\ref{tab:notation}.

\begin{figure}[!h]
\centering
\includegraphics[width=0.4\textwidth]{./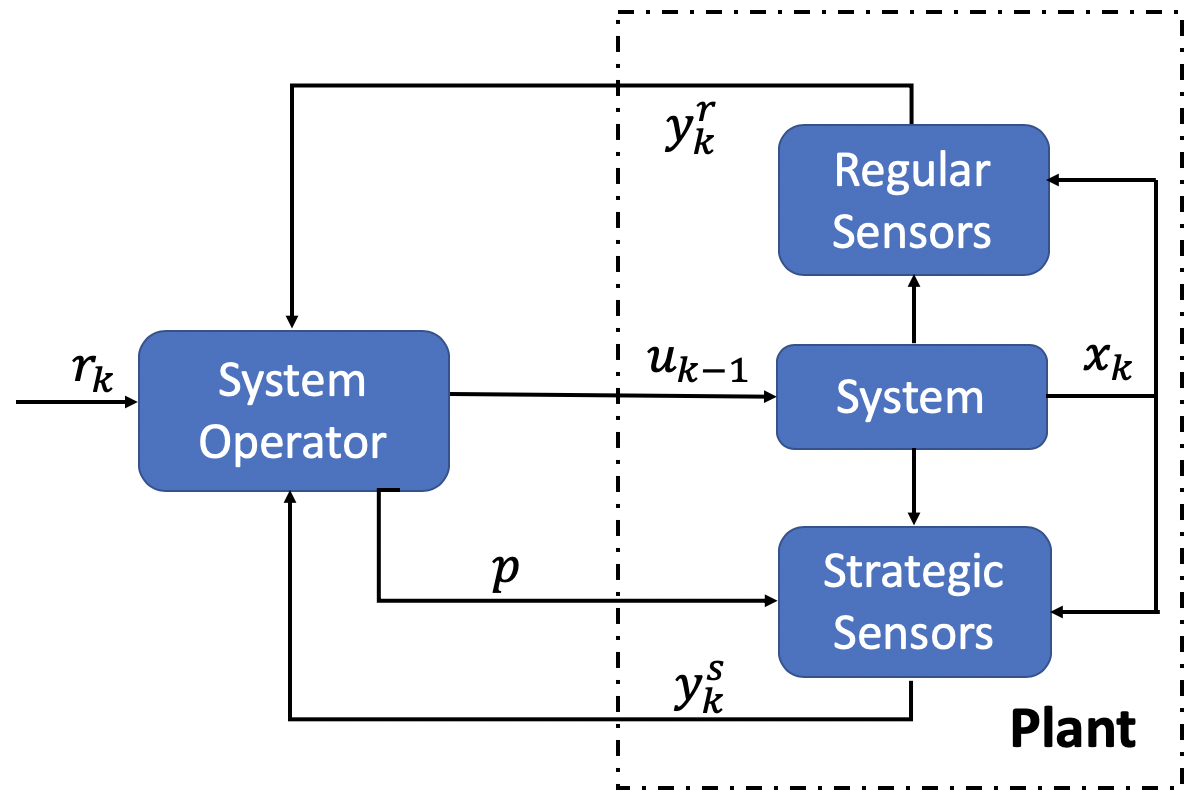}
\caption{Information and control flows in our model.}
\label{fig:block_diagram}
\end{figure}

The system dynamics are given by:
\begin{equation}
\label{eq:linear_system}
\begin{array}{l}{x_{k+1}=A x_{k}+B u_{k}+w_{k}} \\ {y_{k}=\left[\begin{array}{l}{y_{k}^{r}} \\ {y_{k}^{s}}\end{array}\right]= Cx_k + v_k = \left[\begin{array}{l}{C^{r}} \\ {C^{s}}\end{array}\right]x_k + \left[\begin{array}{l}{v_{k}^{r}} \\ {v_{k}^{s}}\end{array}\right] }\end{array} 
\end{equation}
Here, $w_{k}$, $v_{k}^{r}$, and $v_{k}^{s}$ are the process noise, and measurement noise for the regular and strategic sensors, respectively. 

We assume that all noise processes are independent, zero-mean Gaussians. Additionally, the covariance of the strategic sensor's measurements depend on a hidden level of effort $e$.
\[
\begin{aligned} &
x_{0} \sim \mc{N}(0,\Sigma_{x_{0}}) \qquad w_{k} \sim \mc{N}(0,\Sigma_{w}) \\ &
v_k = \left[\begin{array}{l}{v_{k}^{r}} \\ {v_{k}^{s}}\end{array}\right] \sim \mc{N}(0,\Sigma_{v}(e)), \quad \Sigma_{v}(e) = \begin{bmatrix} \Sigma_{v_{r}} & 0 \\ 0 & \Sigma_{v_{s}}(e)\end{bmatrix}
\end{aligned}
\]

For simplicity, in this paper, we assume the following form for the strategic sensor's covariance:
\begin{equation}
\label{eq:effort-covariance}
\begin{split}
& \Sigma_{v_{s}}(e) = \sigma^2(e) I
\end{split}
\end{equation}
Here, $\sigma^2(\cdot)$ is the mapping from effort $e$ to variance $\sigma^2(e)$. Throughout this paper, we will assume that $\sigma^2(\cdot)$ is strictly decreasing, convex, and twice continuously differentiable. Thus, as the level of effort increases, the variance decreases, and effort has diminishing returns. And $\sigma^2(\cdot)$ is a common knowledge for sensors and the system operator. 

The system operator announces a (potentially random) payment function $p$ to the strategic sensor. In Section~\ref{sec:payment}, we will provide a more detailed description of the payment function and which information it depends on. In particular, it is important that the payment $p$ depend on the effort level $e$, but the system operator does not have direct access to $e$.

Given this payment function $p$, the strategic sensor chooses $e$ to optimize its utility $u$ that is defined as:
\begin{equation}
\label{eq:sensor_util}
u(p,e) = \mb{E}[p] - e
\end{equation}
Equation~\eqref{eq:sensor_util} implies that the strategic sensor is effort-averse: all else equal, they would prefer to choose a smaller value of $e$. Additionally, it assumes that the strategic sensor must choose an effort level $e$ prior to the realization of the payment values and commit to it, i.e. that they must behave ex-ante. Furthermore, the strategic sensor is risk-neutral regarding these payments.

The system operator's random cost function is:
\begin{equation}
\label{eq:def_J}
J = 
\sum_{k = 0}^{N-1} \left( x_k^T Q x_k + u_k^T R u_k \right) + x_N^T Q x_N
\end{equation}
Here, $Q \succ 0$ and $R \succ 0$ are given positive definite matrices. 

However, the system operator does not have access to $e$, which is a private information for strategic sensors. Therefore, we allow strategic sensors to report their effort level $\esh$ before they exert $e$. That is, the system operator uses $\esh$ to design LQR and Kalman filter and run the system where the actual measurement with noise is determined by $e$.
Now, we present the order of play in this interaction, and outline which information is available to agents when they make decisions.

\begin{enumerate}
\item The system operator announces the payment function $p$. The system operator knows the function $\sigma^2(\cdot)$, but not the effort $e$ nor the variance $\sigma^2(e)$. 
\item The strategic sensor reports an effort level $\esh$, and exerts an effort level $e$. 
\item At each time $k$, the strategic sensor shares measurement $y_k^s$ with covariance $\Sigma_{v_s}(e)$, and the system operator makes a control decision $u_k$. The control decision can depend on $y_k^s$ and $\esh$, but not $\Sigma_{v_s}(e)$ or $e$. This repeats for $k = 0,\dots,N$.
\item The system operator issues the promised payment to the strategic sensor.
\end{enumerate}

Ideally, if the strategic sensor is truthful, i.e $\esh = e$, then we can minimize the expectation of system cost $\mb{E}[J]$ by  choices of inputs $(u_k)_k$ with corresponding controller and observer. However, strategic sensors can falsify their $\esh$ to maximize their utilities. Therefore, how would $\mb{E}[J]$ will be affected if the control policy is designed with the information that is far from the truth, i.e $\esh$ is far from $e$? And can we construct a payments $p$ to ensure that strategic sensors truthfully report the effort they exert?
This leads us to our problem statement.

\begin{problem*}

Given the system operator's available information, can they design a payment function $p$ that incentivizes truthful reporting (i.e. the strategic sensor is incentivized to report $\esh = e$)?

\end{problem*}

 As a first step, we assume the system operator merely wishes to minimize $\mb{E}[J]$, and can offer any $p$ without incurring any additional cost. Although an unreasonable assumption in practice, we believe the insights of this paper provide an interesting first step. Even under this assumption, the crux of the problem still remains: how can the system operator, given his limited information, incentivize the strategic data source appropriately? 

For example, setting $p$ that depends only on $\esh$ will not incentivize the strategic sensor to exert effort. Instead, it will make strategic sensor report a much higher $\esh$ to maximize their utility without increasing their true effort $e$ at all.

\section{Controller and Observer Design Under Truthful Reporting}
\label{sec:sys_op}


In this section, we outline how the system operator control inputs. We design a controller and observer assuming that the strategic sensor truthfully reports (i.e. that $\esh = e$). Then, in later sections, we will focus on designing payment contracts $p$ that induce truthful reporting.

First, supposing that the true measurement covariance is $\Sigma_v(\esh)$, we can invoke the separation principle to minimize $\mb{E}[J]$. That is, we can minimize our cost by estimating the state with a Kalman filter, and push that state estimate through a gain set by a linear-quadratic-regulator (LQR). Let $L_k(\esh)$ denote the Kalman filter gain, $K_k(\esh)$ denote the LQR feedback gain, and $P_k$ denote the LQR cost-to-go matrix. (See~\cite{Bertsekas:Optimal_control_2000} for more details.) Thus, the feedback control policy is:
\begin{equation}
\label{eq:LQG}
\begin{array}{rcl}
\hat{x}_{k+1} & = & A \hat{x}_{k}+B u_{k}+L_{k+1}\left(y_{k+1}-C\left(A \hat{x}_{k}+B u_{k}\right)\right) \\
u_{k} & = & K_{k} \hat{x}_k
\end{array} 
\end{equation}
Let $\hat{x}_k$ denote the state estimate at time $k$, and let $\Sigma_k$ and $\Sigma_{k+1 | k}$ denote the intermediate covariance estimates in the Kalman filter calculations.

Thus, given a reported effort level $\esh$, this defines a control policy given by the linear-quadratic-Gaussian (LQG) controller defined in Equation~\eqref{eq:LQG}. This allows us to think of the system operator's random cost $J$ as a function of two arguments: $J(e, \esh)$.

\section{Analysis of the Cost $J(e,\esh)$}
\label{sec:cost}


In this section, we outline properties of $J(e, \esh)$ and its expectation $\mb{E}[J(e,\esh)]$. Whereas the underlying true level of effort $e$ is unknown, the system operator can observe their operational cost $J(e,\esh)$. Thus, if the system operator wishes to design payments that incentivize truthful reporting $\esh = e$, we need to explore how this observable quantity $J(e,\esh)$ is affected by the decisions of the strategic sensor. 

First, let's provide an overview of how $e$ and $\esh$ influence our system. The LQR gain $K_k$ does not depend on $\esh$ at all. The estimation gain $L_k(\esh)$ depends on $\esh$ but not $e$. Neither the observer or controller gains are affected by $e$, since the system operator does not know $e$. However, the realized values of the noise $y$ depend on $e$, so the resulting state estimate $\hat{x}$ depends on $e$ and $\esh$. Consequently, $u = K_k \hat{x}$ depends on both terms, and so does $J$.

The key insight of this section is that $J$ admits a decomposition into terms that depend on $e$ and terms that depend on $\esh$. This is given in Equation~\eqref{eq:actual_J}. This allows us to design the payments in Section~\ref{sec:payment} in a fashion that incentivizes truthful reporting.

\subsection{A decomposition of $\mb{E}[\Jes]$}

We can decompose the expected cost to separate its dependence on $e$ and $\esh$. 

Looking at the closed-loop system, with observer and controller gains parameterized by $\esh$, let us define the augmented state and dynamics $\Bar{x}, \Bar{A}(\esh), \Bar{B}(\esh)$, and $\Bar{v}$:
\begin{equation}
\label{eq:bar_defs1}
\underbrace{ 
\begin{bmatrix}
x_{k+1}\\
\hat{x}_{k+1}
\end{bmatrix}
}_{\Bar{x}_{k+1}} = 
\underbrace{ 
\begin{bmatrix}
A & BK_k \\ 
L_{k+1}(\esh)CA & A+BK_k - L_{k+1}(\esh)CA
\end{bmatrix}}_{\Bar{A}_k(\esh)}
\underbrace{
\begin{bmatrix}
x_{k}\\\hat{x}_{k}
\end{bmatrix}}_{\Bar{x}_{k}} 
\end{equation}
 \[
+ \underbrace{\begin{bmatrix}
I & 0\\ L_{k+1}(\esh) C & L_{k+1}
\end{bmatrix}}_{\Bar{B}_k(\esh)} \underbrace{\begin{bmatrix}
w_{k}\\v_{k+1}
\end{bmatrix}}_{\Bar{v}_k(e)}
\]
\begin{equation}
\label{eq:sigma_v}
\Sigma_{\Bar v}(e) =
\begin{bmatrix}
\Sigma_w & 0\\ 0 &\Sigma_v(e)
\end{bmatrix} =
\begin{bmatrix}
\Sigma_w & 0 & 0 \\
0 & \Sigma_{v_r} & 0 \\
0 & 0 & \sigma^2(e) I
\end{bmatrix}
\end{equation}
Furthermore, note that we can recursively apply Equation~\eqref{eq:bar_defs1} to yield:
\begin{equation}
\label{eq:augment_system}
\begin{aligned} 
\Bar{x}_{k+1} & = \Bar{A}_k(\esh) \Bar{x}_{k} +  \Bar{B}_k(\esh) \Bar{v}_k \\ & = \underbrace{\Bar{A}_{k} \cdots \Bar{A}_1 \Bar{A}_0}_{\mathbf{\Bar{A}}_{k}(\esh)}   \Bar{x}_{0} + \underbrace{\begin{bmatrix}
(\Bar{A}_{k}\cdots\Bar{A}_{1}\Bar{B}_0)^T \\  (\Bar{A}_{k}\cdots\Bar{A}_{2}\Bar{B}_1)^T \\  \vdots \\  (\Bar{B}_{k})^T
\end{bmatrix}^T}_{\mathbf{\Bar{B}}_{k}(\esh)} \underbrace{\begin{bmatrix}
\Bar{v}_0 \\  \Bar{v}_1 \\  \vdots \\ \Bar{v}_k
\end{bmatrix}}_{\mathbf{\Bar{V}}_{k}(e)}
\end{aligned}
\end{equation}
Finally, let's separate out the appropriate cost terms by their dependence on $e$ and $\esh$. Let's define the following:
\begin{equation}
\label{eq:bold_q}
\mf{\bar{Q}_k} = 
\begin{bmatrix}
Q & 0\\ 0 & K^T_k R K_k
\end{bmatrix}
\end{equation}
\begin{equation}
\label{eq:bold_sigk}
\mathbf{\bar{\Sigma}}_k(e) = \mb{E}[\mathbf{\bar{V}_k}\mathbf{\bar{V}_k}^\T] = \begin{bmatrix}
\mathbf{\Sigma}_{\Bar{v}(e)} & & \\ & \ddots & \\ & & \mathbf{\Sigma}_{\Bar{v}(e)} \end{bmatrix}
\end{equation}

From Equation~\eqref{eq:sigma_v}, we can extract the $\sigma^2(e)$ from $\mathbf{\bar{\Sigma}}_k(e)$ and define constant matrices $\mf{\bar{\Sigma}}_{k,1}$ and $\mathbf{\bar{\Sigma}}_{k,2}$ such that:
\begin{equation}
\label{eq:sigk1_defs}
\mf{\bar{\Sigma}}_k(e) = \mf{\bar{\Sigma}}_{k,1}  +  \sigma^2(e) \mf{\bar{\Sigma}}_{k,2}
\end{equation}

Similarly, let's look at the initial covariance of our augmented state:
\[
\begin{aligned} & \qquad 
\mb{E}(\Bar{x}_{0}\Bar{x}^T_{0})  \\ & =
\begin{bmatrix}
\Sigma_{x_0} 				& \Sigma_{x_0} C^T L^T_0(\esh) \\ 
L_0(\esh) C \Sigma_{x_0} 	& L_0(\esh) (C \Sigma_{x_0}C^T + \begin{bmatrix}\Sigma_{v_r} & 0 \\ 0 & \Sigma_{v_s} (e)\end{bmatrix}) L^T_0(\esh)
\end{bmatrix}
\end{aligned}
\]
Accoording to \eqref{eq:effort-covariance}, $\mb{E}(\Bar{x}_{0}\Bar{x}^T_{0}) $ can be written as: 
\begin{equation}
\label{eq:var_augment_system}
\mb{E}(\Bar{x}_{0}\Bar{x}^T_{0}) = \mf{\Sigma}_1(\esh) + \sigma^2(e) \mf{\Sigma}_2(\esh)
\end{equation}

Therefore, noting that $K_N = 0$ and the independence of noise across time, we can rewrite $\Jes$ as:
\[
\begin{aligned}
\mb{E}[\Jes] & = \mb{E} \left[\sum_{k=0}^{N}\Bar{x}^T \mf{\Bar{Q}_k} \Bar{x} \right] \\ 
& = \sum_{k=1}^{N} \Bigg( \operatorname{tr} \Big(\mf{\bar{Q}}_k( \mathbf{\Bar{A}}_{k-1}(\esh) \mb{E}(\Bar{x}_{0}\Bar{x}^T_{0})\mathbf{\Bar{A}}^T_{k-1}(\esh)\\ & +  \mathbf{\Bar{B}}_{k-1}(\esh) \mf{\bar{\Sigma}}_k(e)\mf{\Bar{B}}^T_{k-1}(\esh)\Big) \Bigg) \\ 
& + \operatorname{tr} ( \mf{\bar{Q}}_0 \mb{E}(\Bar{x}_{0}\Bar{x}^T_{0})) 
\end{aligned}
\]

Along with $\mathbf{\bar{\Sigma}}_k(e)$ defined in (\ref{eq:sigk1_defs}), we can represent $\mb{E}[\Jes]$ in a concise form:
\begin{equation}
\label{eq:actual_J}
\mb{E}[\Jes]  = f_1(\esh) + \sigma^2(e)f_2(\esh)
\end{equation}
Where $f_1(\esh)$ and $f_2(\esh)$ are non-negative functions that are sum of traces of multiplications of positive definite matrices:
\begin{equation}
\label{eq:f_i}
\begin{aligned}
f_i(\esh) = &
= \sum_{k=1}^{N} \Bigg( \operatorname{tr} \Big(\mf{\bar{Q}}_k( \mathbf{\Bar{A}}_{k-1}(\esh) \mf{\Sigma}_i(\esh) \mathbf{\Bar{A}}^T_{k-1}(\esh)\\ & +  \mathbf{\Bar{B}}_{k-1}(\esh) \mf{\bar{\Sigma}}_{k,i} \mf{\Bar{B}}^T_{k-1}(\esh)\Big) \Bigg) \\ 
& + \operatorname{tr} ( \mf{\bar{Q}}_0  \mf{\Sigma}_i(\esh))  \qquad \text{ for } i = 1,2
\end{aligned}
\end{equation}

\subsection{Properties of $\mb{E}[\Jes]$}

Given the decomposition in Equation~\eqref{eq:actual_J}, we can derive properties of $\mb{E}[\Jes]$.

First, we fix the reported effort $\esh$ and view this as a function of the true effort $e$.

\begin{proposition}
For a fixed $\esh$, $e \mapsto \mb{E}[\Jes]$ is convex and strictly decreasing.
\end{proposition}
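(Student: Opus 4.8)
The plan is to take the decomposition $\mb{E}[\Jes] = f_1(\esh) + \sigma^2(e) f_2(\esh)$ from Equation~\eqref{eq:actual_J} as given and observe that, once $\esh$ is held fixed, both $f_1(\esh)$ and $f_2(\esh)$ are constants in $e$. Thus the map $g(e) := \mb{E}[\Jes]$ is an affine function of $\sigma^2(e)$, namely $g(e) = f_1(\esh) + f_2(\esh)\,\sigma^2(e)$, and every analytic property of $g$ is inherited from the assumed properties of $\sigma^2(\cdot)$ (strictly decreasing, convex, twice continuously differentiable) together with the sign of the coefficient $f_2(\esh)$.

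Since $g$ is differentiable, I would compute $g'(e) = f_2(\esh)\,(\sigma^2)'(e)$ and $g''(e) = f_2(\esh)\,(\sigma^2)''(e)$. Convexity follows immediately: $f_2(\esh) \ge 0$ by Equation~\eqref{eq:f_i} and $(\sigma^2)''(e) \ge 0$ by convexity of $\sigma^2(\cdot)$, so $g''(e) \ge 0$; equivalently, $g$ is a nonnegative multiple of a convex function plus a constant. For strict monotonicity, $(\sigma^2)'(e) < 0$ everywhere because $\sigma^2(\cdot)$ is strictly decreasing, so $g'(e) < 0$ holds as soon as $f_2(\esh) > 0$.

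The only real obstacle is upgrading the bound $f_2(\esh) \ge 0$ supplied by the decomposition to the strict inequality $f_2(\esh) > 0$ needed for strict monotonicity. I would argue this by showing that the strategic sensor's measurement noise genuinely reaches the cost. Inspecting the summands of $f_2(\esh)$ in Equation~\eqref{eq:f_i}, the matrices $\mf{\bar{\Sigma}}_{k,2}$ are exactly the blocks extracted from the $\sigma^2(e) I$ term in Equation~\eqref{eq:sigma_v}, so they are nonzero and supported on the strategic-measurement coordinates, while each $\mf{\bar{Q}}_k \succeq 0$ carries the block $Q \succ 0$ acting on the true state. Because the Kalman gain $L_k(\esh)$ places nonzero weight on the strategic measurements whenever those measurements are informative (e.g. $C^s \ne 0$), the noise $v_k^s$ propagates through $\mathbf{\Bar{B}}_{k-1}(\esh)$ into the estimate, the input $u = K_k \hat{x}$, and the state, so at least one trace $\operatorname{tr}\!\big(\mf{\bar{Q}}_k \mathbf{\Bar{B}}_{k-1}(\esh) \mf{\bar{\Sigma}}_{k,2} \mathbf{\Bar{B}}^T_{k-1}(\esh)\big)$ is strictly positive. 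Rather than verifying positivity term by term, I would isolate a single such summand and show it is positive by pairing the nonzero strategic columns of $\mathbf{\Bar{B}}_{k-1}(\esh)$ against the positive-definite $Q$-block, which yields $f_2(\esh) > 0$ and completes the argument.
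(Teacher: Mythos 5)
Your proposal is correct and takes essentially the same route as the paper: for fixed $\esh$, the decomposition in Equation~\eqref{eq:actual_J} exhibits $\mb{E}[\Jes]$ as an affine function of $\sigma^2(e)$ with nonnegative coefficient $f_2(\esh)$, so convexity and strict decrease are inherited from the assumed properties of $\sigma^2(\cdot)$. If anything, you are more careful than the paper, whose one-line proof silently upgrades $f_2(\esh) \ge 0$ to $f_2(\esh) > 0$; your observation that strict monotonicity genuinely requires strict positivity of $f_2(\esh)$ — which in turn needs the strategic sensor's noise to actually reach the cost, e.g.\ $C^s \neq 0$ and nontrivial Kalman and feedback gains — flags a gap that the paper's own proof leaves unaddressed.
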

\begin{proof}
Noting the decomposition in Equation~\eqref{eq:actual_J}, for a fixed $\esh$, we can see that this mapping is the composition of a convex function with an affine function. The fact that this function is strictly decreasing follows from the fact $\sigma^2(\cdot)$ is strictly decreasing.
\end{proof}

Next, we fix the true effort $e$ and look at the reported effort $\esh$. Note that the LQG controller specified in Section~\ref{sec:sys_op} is optimal under truthful reporting, since, in this case, our problem reduces to the classical LQG control problem.

Based on empirical evidence, we conjecture that it seems that $f_1$ is generally decreasing and convex whereas $f_2$ is generally increasing and concave. (Varying the system parameters yielded this result consistently.) 

Consider the following 2-D example with time horizon $N = 300$:
\[
A =
\begin{bmatrix}
0.7 & 0 \\
0.7 & 0.7 
\end{bmatrix}
\quad
B = 
\begin{bmatrix}
1 \\ 0 
\end{bmatrix}
\quad
C^r = 
\begin{bmatrix}
1 & 0
\end{bmatrix} 
\]
\[
C^s = 
\begin{bmatrix}
0 & 1
\end{bmatrix}
\quad
\Sigma_{v_s} = 1 \quad R = 1
\]
\[
\Sigma_{x_0} = \Sigma_{w} = \begin{bmatrix}
1 & 0 \\ 0 & 1
\end{bmatrix}
\quad
Q = 
\begin{bmatrix}
1 & 0 \\ 0 & 1
\end{bmatrix}
\quad
\Sigma_{v_h} = \sigma^2(e) = \frac{1}{e}
\]
The plots of $f_1$ and $f_2$ for this system are shown in Figure~\ref{fig:f1_and_f2}.

\begin{figure}[!h]
\centering
\includegraphics[width=0.5\textwidth]{./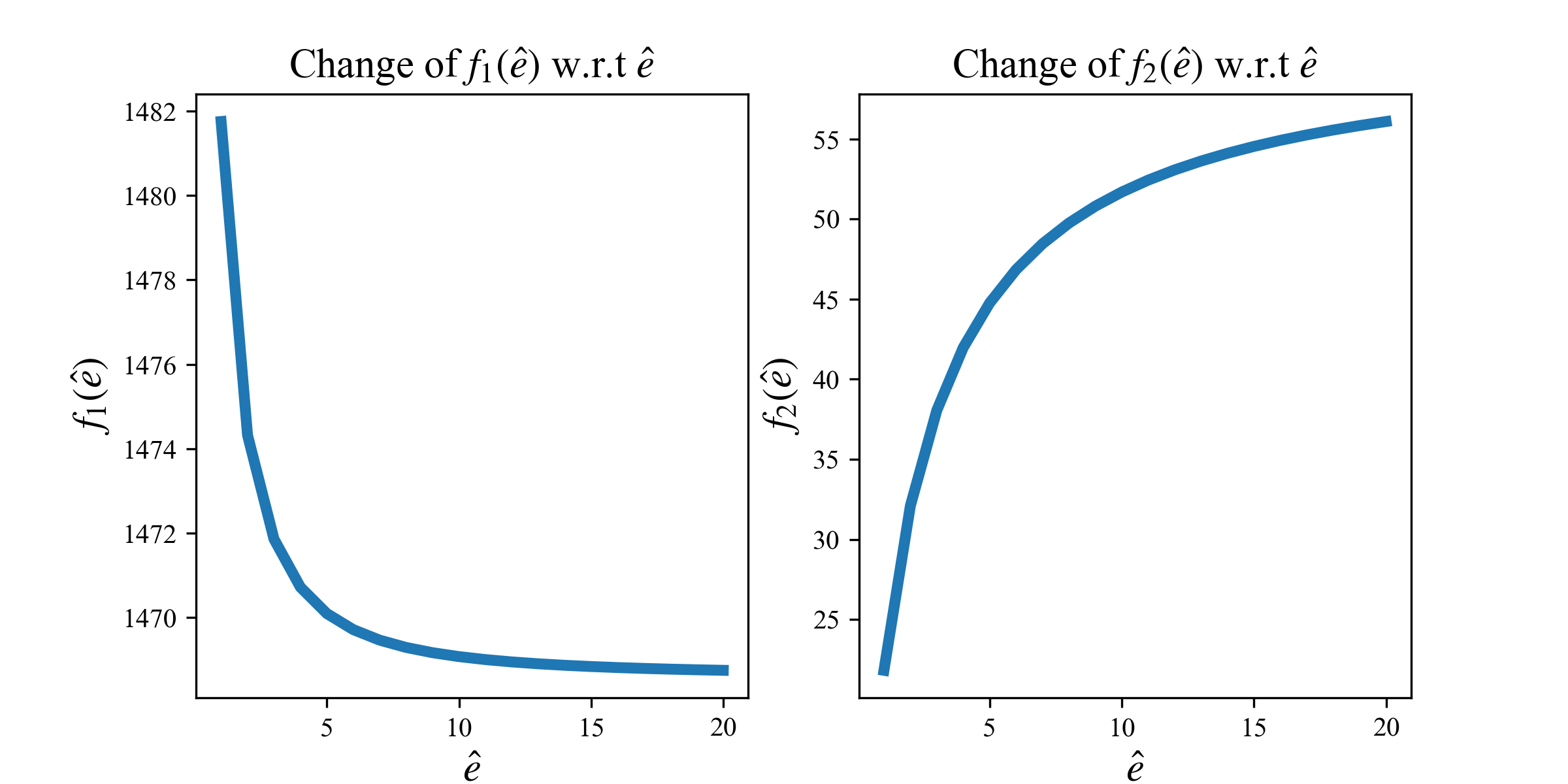}
\caption{\label{fig:f1_and_f2} $f_1(\esh)$ and $f_2(\esh)$ as a function of $\esh$ for the 2-D example described in Section~\ref{sec:cost}.}
\end{figure}

If this conjecture is true, the decomposition implies that $(e, \esh) \mapsto \mb{E}[\Jes]$ is neither convex nor concave, which is shown in Proposition~\ref{prop:convexity_of_J}. 
\begin{proposition}
\label{prop:convexity_of_J}
If $f_1$ is convex and $f_2$ is concave (or vice versa), then $(e,\esh) \mapsto \mb{E}[\Jes]$ is neither convex nor concave.
\end{proposition}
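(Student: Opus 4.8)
The plan is to reduce the joint convexity/concavity question to the sign structure of the $2\times2$ Hessian of the map $g(e,\esh) := \mb{E}[\Jes] = f_1(\esh) + \sigma^2(e)f_2(\esh)$. This reduction is legitimate because $f_1$, $f_2$, and $\sigma^2(\cdot)$ are all twice continuously differentiable, so $g$ is $C^2$ and joint convexity (respectively concavity) is equivalent to its Hessian being positive semidefinite (respectively negative semidefinite) at every point. The crucial elementary facts I would lean on are: a $C^2$ function cannot be convex once its Hessian has a strictly negative diagonal entry at even one point, and cannot be concave once its Hessian has a strictly positive diagonal entry at even one point. So the whole proof amounts to exhibiting one point witnessing each of the two failures.

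First I would write out the Hessian of $g$ with respect to $(e,\esh)$:
\[
\nabla^2 g(e,\esh) = \begin{bmatrix} (\sigma^2)''(e)\, f_2(\esh) & (\sigma^2)'(e)\, f_2'(\esh) \\ (\sigma^2)'(e)\, f_2'(\esh) & f_1''(\esh) + \sigma^2(e)\, f_2''(\esh) \end{bmatrix}.
\]
The two diagonal entries $H_{11} = (\sigma^2)''(e)\,f_2(\esh)$ and $H_{22} = f_1''(\esh) + \sigma^2(e)\,f_2''(\esh)$ carry the entire argument, so the off-diagonal term need not even be analyzed.

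For the failure of concavity, which is identical in both the stated case and its ``vice versa,'' I would use $H_{11}$. Since $\sigma^2(\cdot)$ is convex with strictly diminishing returns, $(\sigma^2)''>0$, and since $f_2$ is a nonnegative sum of traces of products of positive definite matrices that is strictly positive whenever the strategic sensor influences the cost, we get $H_{11}>0$; hence the Hessian is not negative semidefinite and $g$ is not concave. This is exactly the earlier proposition stating that each vertical slice $e\mapsto\mb{E}[\Jes]$ is convex, which already forbids concavity. For the failure of convexity I would instead drive $H_{22}$ negative. In the case where $f_1$ is convex and $f_2$ concave, I pick $\esh_1$ in a region where $f_2''(\esh_1)<0$; because $\sigma^2(e)>0$ is strictly decreasing, taking $e$ small makes $\sigma^2(e)$ large enough that $\sigma^2(e)f_2''(\esh_1)$ dominates $f_1''(\esh_1)\ge 0$, giving $H_{22}<0$. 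In the ``vice versa'' case I pick $\esh_1$ where $f_1''(\esh_1)<0$ and take $e$ large so that $\sigma^2(e)f_2''(\esh_1)\ge 0$ is too small to offset $f_1''(\esh_1)$, again giving $H_{22}<0$. In either case the Hessian is not positive semidefinite there, so $g$ is not convex, and combining the two witnesses proves $g$ is neither convex nor concave.

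The main obstacle is degeneracy: the argument silently relies on strictness, namely $(\sigma^2)''>0$, $f_2>0$, and the existence of a point of strict concavity of $f_2$ (respectively strict convexity of $f_1$), together with an attainable range of $\sigma^2(e)$ wide enough for one diagonal term to dominate the other. These all hold under the paper's standing assumptions (diminishing returns makes $\sigma^2$ strictly convex, the example $\sigma^2(e)=1/e$ provides the full range $(0,\infty)$, and the conjectured shapes of $f_1,f_2$ are strict), but a fully rigorous statement should record them as explicit hypotheses, since if $\sigma^2$ or $f_1,f_2$ were merely affine the map could collapse to something jointly convex. I would therefore flag these non-degeneracy conditions rather than leave ``convex/concave'' to be read in the non-strict sense.
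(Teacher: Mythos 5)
Your proof is correct and is essentially the paper's own argument made rigorous: the paper's two-line proof is exactly the dominance you formalize in $H_{22}$ --- for small $e$ the term $\sigma^2(e)f_2(\widehat e)$ dominates, so a strictly concave piece of $f_2$ destroys convexity, while for large $e$ (or, in your cleaner variant, via $H_{11}>0$ from the strict convexity of $\sigma^2$) the convex part dominates and destroys concavity. Your explicit Hessian computation and your flagged non-degeneracy hypotheses (strict convexity of $\sigma^2$, $f_2>0$, strict curvature of $f_1$ or $f_2$ somewhere, and a sufficiently wide attainable range of $\sigma^2(e)$) are exactly the conditions the paper's sketch silently assumes, so if anything your version is more careful than the original.
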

\begin{proof}
From Equation~\eqref{eq:actual_J}, $\mb{E}[\Jes]$ will be dominated by $f_2(\esh)$ with small $e$, and will be dominated by $f_1(\esh)$ for a large $e$.
\end{proof}

At a minimum, the 2-D example shows that the mapping $(e,\esh) \mapsto \mb{E}[\Jes]$ is not concave nor convex, and this property will hold in general.

Additionally, let $J^*$ denote the expected cost under truthful reporting:
\begin{equation}
\label{eq:jstar}
J^*(\esh) = \mb{E}[J(\esh,\esh)]
\end{equation}
In this case, the problem reduces to a classical control problem and we can see $J^*(\esh)$ decreases as $\esh$ increases.

\begin{proposition}
\label{prop:J_decrease}
The optimal cost $J^*(\esh)$ decreases as $\esh$ increases.
\end{proposition}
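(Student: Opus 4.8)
The plan is to avoid working directly with the decomposition in Equation~\eqref{eq:actual_J}, since at $e=\esh$ we would have $J^*(\esh)=f_1(\esh)+\sigma^2(\esh)f_2(\esh)$, and the dependence of $f_1,f_2$ on $\esh$ through the Kalman gains $L_k(\esh)$ is too intricate to differentiate cleanly. Indeed, even granting the empirical conjecture that $f_1$ is decreasing and $f_2$ increasing, the product $\sigma^2(\esh)f_2(\esh)$ of a decreasing and an increasing factor has ambiguous monotonicity, so the decomposition is the wrong tool here. Instead I would exploit optimality. As established in Section~\ref{sec:sys_op}, under truthful reporting the closed loop reduces to a classical LQG problem, so the controller and observer built from $\esh$ are the optimal measurement-based policy for the true noise covariance $\Sigma_v(\esh)$. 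Hence $J^*(\esh)$ equals the minimal achievable expected cost over all admissible causal policies when the measurement noise has covariance $\Sigma_v(\esh)$, and the statement reduces to: the optimal LQG cost is monotone nondecreasing in the measurement-noise covariance in the positive semidefinite order.

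Second, I would translate effort into the PSD order. Fix $\esh_1<\esh_2$. Since $\sigma^2(\cdot)$ is strictly decreasing, $\sigma^2(\esh_2)<\sigma^2(\esh_1)$, and from Equations~\eqref{eq:effort-covariance} and \eqref{eq:sigma_v} only the strategic block of $\Sigma_v$ depends on effort, so
\[
\Sigma_v(\esh_1)-\Sigma_v(\esh_2)=\begin{bmatrix} 0 & 0 \\ 0 & \big(\sigma^2(\esh_1)-\sigma^2(\esh_2)\big) I \end{bmatrix}\succeq 0,
\]
i.e. $\Sigma_v(\esh_2)\preceq\Sigma_v(\esh_1)$. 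Thus a larger reported-and-exerted effort corresponds to a weakly smaller measurement-noise covariance.

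Third, for the monotonicity of the optimal cost I would use an information-garbling (coupling) argument rather than a direct Riccati computation. Writing $\Delta=\Sigma_v(\esh_1)-\Sigma_v(\esh_2)\succeq 0$, I can realize the noisier ($\esh_1$) measurement process as the cleaner ($\esh_2$) one corrupted by independent additive noise: with $\eta_k\sim\mc{N}(0,\Delta)$ drawn independently across $k$ and of all other noise, $y_k(\esh_1)\eqd y_k(\esh_2)+\eta_k$ jointly over $k$. Any admissible policy for the $\esh_1$ system is a causal function of the measurements $y_0(\esh_1),y_1(\esh_1),\dots$; because $\eta$ is exogenous and independent of the system noise, such a policy can be reproduced in the $\esh_2$ system by internally generating $\eta$ and feeding $y_k(\esh_2)+\eta_k$ into it. Hence every expected cost achievable with the $\esh_1$ measurements is also achievable with the $\esh_2$ measurements, so the infimum over policies can only be smaller for $\esh_2$, giving $J^*(\esh_2)\le J^*(\esh_1)$, the desired monotonicity. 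As a cross-check I would note the more computational route: the Kalman filtering Riccati recursion is monotone in $\Sigma_v$, and the LQG cost is a trace-linear, hence increasing, function of the resulting error covariances.

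The main obstacle is making the garbling step precise over the finite horizon: one must verify that the added-noise coupling respects the causal filtration, so that the simulated policy remains admissible at each $k$, and that both problems are optimized over the same class of policies, so that feasibility for the $\esh_1$ problem genuinely implies feasibility for the $\esh_2$ problem; accounting for any randomization in the policy is routine but should be stated. Finally, the proposition as phrased asks only for weak monotonicity; promoting ``decreases'' to strictly decreasing would require $\Delta$ to inject strictly positive information loss, which is guaranteed under a detectability condition on $(A,C^s)$ ensuring the strategic sensor is actually informative, so I would either impose such an assumption or state the strict version as holding generically.
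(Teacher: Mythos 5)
Your argument is correct (modulo the details you yourself flag), but it is a genuinely different proof from the paper's. The paper proceeds computationally: it writes the optimal cost explicitly in terms of the filtering covariances as in Equation~\eqref{eq:jopt}, then shows the filter error covariance is monotone in the report via Gaussian conditioning --- only $\mf{\Sigma}_{yy} = M + \sigma^2(\esh)D$ depends on $\esh$, so increasing $\sigma^2(\esh)$ enlarges $\mf{\Sigma}_{yy}$, shrinks $\mf{\Sigma}_{xy}\mf{\Sigma}_{yy}^{-1}\mf{\Sigma}_{yx}$, and hence enlarges the conditional covariance $\mf{\Sigma}_{xx}-\mf{\Sigma}_{xy}\mf{\Sigma}_{yy}^{-1}\mf{\Sigma}_{yx}$; that is precisely the ``more computational route'' you relegate to a cross-check. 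Your primary argument is instead a Blackwell-style garbling coupling: the noisier measurement process is the cleaner one plus exogenous independent noise, so any causal (possibly randomized) policy for the $\esh_1$ system can be simulated inside the $\esh_2$ system with identical closed-loop law, giving $J^*(\esh_2)\le J^*(\esh_1)$ once one notes (i) by the separation principle $J^*(\esh_1)$ is the infimum over all causal policies, and (ii) a randomized policy's expected cost is an average of deterministic-policy costs, hence at least $J^*(\esh_2)$ --- the one step you call routine but which is what makes the nesting airtight. What your route buys: it avoids Riccati and conditioning computations entirely, makes the policy class explicit (the paper leaves it implicit), and generalizes verbatim to non-quadratic costs and nonlinear dynamics. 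What it gives up: strictness --- the coupling alone yields only weak monotonicity, and you correctly note that strict decrease requires an additional informativeness condition on the strategic sensor, whereas the paper's computational route asserts the strict comparison $\Sigma_k(\esh)\succ\Sigma_k(\esh')$ directly (though the paper, too, does not fully justify strictness, nor the step reducing the cost formula to that single comparison). Both proofs share the same opening move of translating $\esh'>\esh$ into the positive semidefinite ordering of the measurement-noise covariances.
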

\begin{proof} 
Note that:
\begin{equation}
\label{eq:jopt}
\begin{aligned} 
\Jesh  = & \sum_{k=0}^{N-1} \operatorname{Tr}\left(Q \Sigma_{k}(\esh)\right)+ Q_N \Sigma_{N} \\ & +\sum_{k=0}^{N} \operatorname{Tr} P_{k}\left(\Sigma_{k | k-1}(\esh)-\Sigma_{k}(\esh)\right)
\end{aligned}
\end{equation}
From Equation~\eqref{eq:jopt}, it suffices to show that 
\[
\Sigma_k(\esh) \succ \Sigma_k(\esh') \qquad \text{if} \quad \esh' > \esh
\]
For the state $x_k$ and history of observations $y_0, \dots, y_k$, these random vectors are jointly Gaussian:
\[
\left[\begin{array}{c}{x_{k}} \\ {y_{0}} \\ {\vdots} \\ {y_{k}}\end{array}\right] \quad \sim \quad \mathcal{N}\left(\mathbf{0}, \mf{\Sigma}\right) \quad  \mf{\Sigma} = \begin{bmatrix}
\mf{\Sigma}_{xx} & \mf{\Sigma}_{xy} \\ \mf{\Sigma}_{yx} & \mf{\Sigma}_{yy}
\end{bmatrix}
\]
Note that $\mf{\Sigma}_{xy}$ and $\mf{\Sigma}_{yx}$ do not depend on $\esh$ because the input $u$ is known and therefore $\mf{E}[xy]$ does not depends on $\esh$. That is, only $\mf{\Sigma}_{yy}$ depends on the measurement noise covariance, and it can be written as a form of $\Sigma_{yy} = M + \sigma^2(\esh)D$, where $M$ is a  positive definite matrix and $D$ is a diagonal matrix. Then the covariance of the conditional distribution $x_k | y_0,y_1, \dots, y_k$ is $\mf{\Sigma}_{xx} - \mf{\Sigma}_{xy} {\mf{\Sigma}}_{yy}^{-1} \mf{\Sigma}_{yx}$. 
Therefore $\sigma^2(\esh) > \sigma^2(\esh')$ implies $J^{\star}(\esh)> J^{\star}(\esh')$.
\end{proof}

We plot the functions $\esh \mapsto \mb{E}[\Jes]$ and $J^*$ in Figure~\ref{fig:EJ_and_Jstar}. We can see that for a fixed $e$, the shape of $\esh \mapsto \mb{E}[\Jes]$ changes with different values $e$. For small values of $e$, the function looks concave for the range of domain values plotted, and for larger values of $e$, the function looks convex. 
\begin{figure}[!htbp]
\centering
\includegraphics[width=0.45\textwidth]{./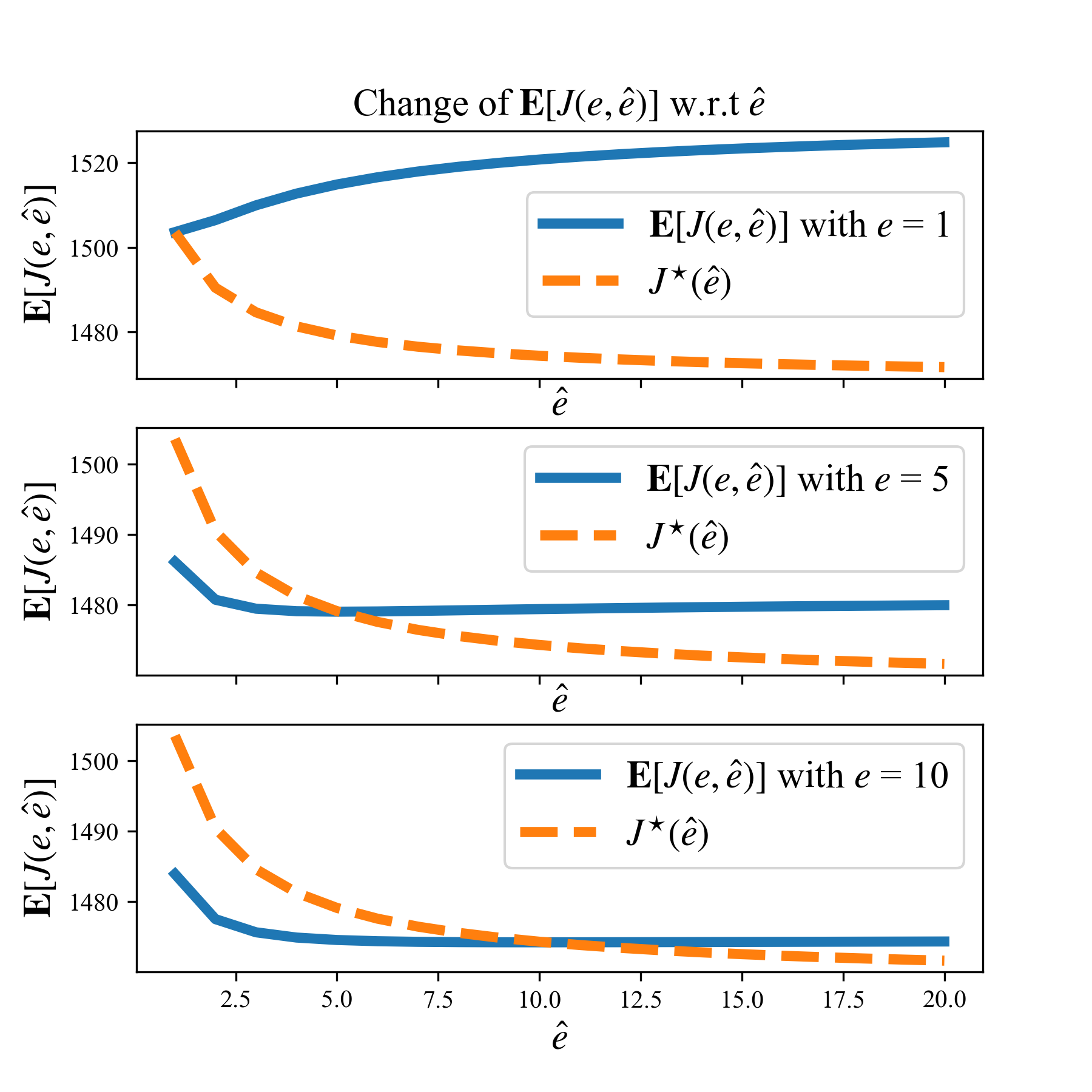}
\caption{\label{fig:EJ_and_Jstar} The functions $\esh \mapsto \mb{E}[\Jes]$ and $J^*(\esh)$ as a function of $\esh$, for different values of $e$. (Note $J^*(\esh)$ does not depend on $e$.)}
\end{figure}

\section{Exploration of Payment Functions}
\label{sec:payment}


The main goal of this work is to study how a system operator can design payment functions $p$ to incentivize a strategic data source 
to truthfully report their effort $\esh = e$. In this section, we construct payment functions $p$ which are based on the information available to the system operator. 

Note that $J^*(e)$ is simply the expected cost of the classical LQG controller, and this can be calculated a priori, according to \eqref{eq:jstar}. Additionally, the system operator can observe $\Jes$ based on the actual behavior of the underlying dynamical system.

Let's construct the payment function now.

First, let's discuss truthful reporting. That is, for a fixed $e$, we wish for the strategic sensor's optimal $\esh$ to be $\esh = e$. Note that $\mb{E}[\Jes] = \Jesh$ when $\esh = e$, and, by the decomposition in Equation~\eqref{eq:actual_J} and the properties of $\sigma^2(\cdot)$, $\mb{E}[\Jes] \neq \Jesh$ when $\esh \neq e$. Thus, the payments should penalize deviations between $\mb{E}[\Jes]$ and $\Jesh$. Since we only have access to a sample $\Jes$ rather than the expected value $\mb{E}[\Jes]$, a common way to do this is through a quadratic penalty, which can decompose into a bias term and a direct penalty on the quantity of interest.
\[
(\Jes - \Jesh)^2 = (\Jes - \mb{E}[\Jes])^2
\]
\[
+ (\mb{E}[\Jes] - \Jesh)^2
\]
However, payment with only quadratic penalty will make sensors exert small effort, although they will truthfully report it. Therefore, we would like the sensor to internalize the operational cost: the lower the realized value $\Jes$, the higher the payment should be. Thus, we define the payment:
\begin{equation}
\label{eq:simple_p}
p_0(\Jesh,J(e, \esh)) = a - b_J (J(e, \esh) - \Jesh)^2 - b_e J(e, \esh)
\end{equation}
Where $a$, $b_J$, and $b_e$ are non-negative constants. The term with $b_J$ is to incentivize truthful reporting, and the term with $b_e$ is to incentivize sensors to choose a higher value of $e$. This form of payment function has been commonly used in many related literature, i.e. \cite{D.G_2016, T.W_2019, cai2014optimum}. In our model, we assumed the strategic sensor acts ex-ante and is risk-neutral, so we can see the expected value of this payment is:
\[
\begin{aligned} &
\mb{E}\left[p_0(\Jesh,J(e, \esh))  \right]  \\ & \quad =  a - b_J \mb{E}\left[(J(e, \esh) - \Jesh)^2 \right] - b_e\mb{E}\left[J(e, \esh)\right] \\ & \quad =  a - b_J\left(\mf{Var}[J(e, \esh)] + (\sigma^2(e) - \sigma^2(\esh))^2 f^2_2(\esh)\right) \\ & \qquad - b_e(f_1(\esh) + \sigma^2(e) f_2(\esh))
\end{aligned}
\]
However, the optimal point of the sensor's utility, $\mf{E}\left[p_0(\Jesh,J(e, \esh)) \right] - e$, depends on the properties of $f_1(\esh)$ and $f_2(\esh)$. As shown in Section~\ref{sec:cost} in Equation~\eqref{eq:f_i}, we can calculate $f_1$ and $f_2$ given the system parameters. However, none of our previous results would imply that the sensor's expected utility would be minimized by $\esh = e$. Furthermore, the term $\mf{Var}[J(e, \esh)]$ also depends on $e$ and $\esh$, and does not disappear as other non-dynamic applications. Therefore, we can not guarantee this payment incentivizes the strategic sensor to truthfully report $\esh = e$. This leads to our next proposition.

\begin{proposition}
In general, the payment contract $p_0$, given in Equation~\eqref{eq:simple_p}, does not incentivize truthful reporting.
\end{proposition}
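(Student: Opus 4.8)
The plan is to attack this through a first-order necessary condition and show it fails generically. Fix the true effort $e$ and regard the sensor's expected utility $u(\esh) = \mb{E}[p_0(\Jesh,\Jes)] - e$ purely as a function of the report $\esh$; since the $-e$ term is constant in $\esh$, truthful reporting requires that $\esh = e$ maximize $u$. I would start from the closed-form expression for $\mb{E}[p_0]$ derived immediately before the proposition, namely
\[
u(\esh) = a - b_J\big(\mf{Var}[\Jes] + (\sigma^2(e) - \sigma^2(\esh))^2 f_2^2(\esh)\big) - b_e\big(f_1(\esh) + \sigma^2(e) f_2(\esh)\big) - e,
\]
and then impose the interior stationarity condition $u'(\esh)\big|_{\esh=e} = 0$ as a necessary condition for truthfulness.

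The key simplification is that the ``bias'' term does not contribute to this first-order condition. Because $\sigma^2(e) - \sigma^2(\esh)$ enters squared, both $(\sigma^2(e) - \sigma^2(\esh))^2 f_2^2(\esh)$ and its $\esh$-derivative vanish at $\esh = e$. Differentiating the remaining terms and evaluating at $\esh = e$ therefore leaves
\[
u'(\esh)\big|_{\esh=e} = -\,b_J \frac{\partial}{\partial \esh}\mf{Var}[\Jes]\Big|_{\esh=e} -\, b_e\big(f_1'(e) + \sigma^2(e) f_2'(e)\big).
\]
I would then argue that, for $p_0$ to elicit truthful reporting, this expression must vanish for \emph{every} admissible $e$, i.e. it amounts to a continuum of equations constraining only the two free constants $b_J$ and $b_e$ (the offset $a$ drops out entirely). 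Unless the two $e$-dependent coefficients $\partial_{\esh}\mf{Var}[\Jes]\big|_{\esh=e}$ and $f_1'(e) + \sigma^2(e)f_2'(e)$ happen to be proportional as functions of $e$ --- a knife-edge condition on the system matrices, incompatible moreover with the sign constraints $b_J, b_e \geq 0$ --- no single choice of $(b_J, b_e)$ can annihilate $u'$ identically. In particular, incentivizing nonzero effort forces $b_e > 0$, so the linear-cost drift $b_e(f_1'(e) + \sigma^2(e)f_2'(e))$ is genuinely present and, given the empirical $f_1' < 0$, $f_2' > 0$, does not vanish on its own.

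The step I expect to be the crux is controlling the variance term $\partial_{\esh}\mf{Var}[\Jes]$, which is precisely the dynamic feature absent from the static mechanisms of~\cite{D.G_2016, T.W_2019, cai2014optimum}: there the realized-cost variance conditioned on the report is independent of the report, so this derivative is zero and the quadratic-penalty scheme works. Here the closed-loop covariances $\mf{\bar{\Sigma}}_k(e)$ and $\mb{E}(\Bar{x}_0 \Bar{x}_0^T)$ are pushed through $\esh$-dependent gains $\Bar{A}_k(\esh), \Bar{B}_k(\esh)$, so $\mf{Var}[\Jes]$ genuinely depends on $\esh$ and its derivative at $\esh = e$ is generically nonzero. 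Rather than establish this nonvanishing in closed form, I would finish by exhibiting it on the 2-D example of Section~\ref{sec:cost}, where $f_1$ is decreasing and $f_2$ increasing: numerically evaluating $u'(\esh)\big|_{\esh=e}$ there shows it is nonzero for generic $(b_J, b_e)$, so $\esh = e$ is not even a stationary point of the sensor's utility, witnessing that $p_0$ fails to incentivize truthful reporting in general.
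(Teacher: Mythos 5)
Your proposal is correct and takes essentially the same approach as the paper: the paper's own proof is a one-liner (``Follows from the previous discussion''), and that discussion contains precisely the ingredients you use --- the expansion of $\mb{E}[p_0]$ into the squared bias term, the drift terms $f_1(\esh)$ and $f_2(\esh)$, and the report-dependent variance $\mf{Var}[\Jes]$ which, unlike in the static settings of the cited literature, does not drop out. The one respect in which you go beyond the paper is worth noting: by writing the first-order condition at $\esh = e$, observing that the squared bias term and its derivative vanish there, and counting a continuum of stationarity constraints against the two nonnegative constants $b_J$, $b_e$, you establish the sharper claim that $\esh = e$ generically fails to be even a stationary point of the sensor's expected utility, whereas the paper stops at the informal ``we can not guarantee''; your version, especially with the proposed numerical check on the 2-D example, is the more defensible form of the argument.
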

\begin{proof}
Follows from the previous discussion.
\end{proof}

To resolve this problem, we wish to make sure that $\esh = e$ is the optimal point, even as $f_1(\esh)$, $f_2(\esh)$ and $\mf{Var}[J(e, \esh)]$ change. Because $f_1(\esh)$ and $f_2(\esh)$ can be calculated ex-ante, we modify the payment in Equation~\eqref{eq:simple_p} by making the coefficient $b_J$ depend on $\esh$, and also introducing the $f_2(\esh)$ term. 
Let $b_J$ be any non-negative, convex, twice continuously differentiable function. 
Then, we can define the new payment $p(\Jesh,J(e, \esh))$ as:
\begin{equation}
\label{eq:def_p}
\begin{aligned} &
p(\Jesh,J(e, \esh))\\ & =a - b_{J}(\esh)\left(\frac{J(e, \esh) - J^*(\esh)}{f_2(\esh)}\right)^2 \\ & \quad - b_{e}\left(\frac{J(e, \esh) - J^*(\esh)}{f_2(\esh)} + \sigma^2(\esh)\right)
\end{aligned}
\end{equation}
Taking the expectation of this payment function, and noting our decomposition in Equation~\eqref{eq:actual_J}:
\begin{equation}
\label{eq:expected_p}
\begin{aligned} &
\mb{E}\left[p(\Jesh,J(e, \esh))\right]\\ &  =a - b_{J}(\esh)\mb{E}\left(\frac{(J(e, \esh) - \Jesh}{f_2(\esh)}\right)^2 \\  
& \quad - b_{e}\left(\frac{\mb{E}[J(e, \esh)] - \Jesh}{f_2(\esh)}  + \sigma^2(\esh)\right) \\
& = \underbrace{- \frac{b_{J}(\esh)}{f_2^2(\esh)}\mf{Var}[J(e, \esh)]}_{p_{var}} + \ps
\end{aligned} 
\end{equation}
Where:
\begin{equation}
\label{eq:pstar}
\ps = a - b_{J}(\esh)(\sigma^2(e) - \sigma^2(\esh))^2- b_{e}\sigma^2(e)
\end{equation}
By construction, the $f_1(\esh)$ terms cancel out and it does not show up in \eqref{eq:expected_p}.
Additionally, $b_{J}(\esh)$ can be chosen carefully to compensate for the $f_2(\esh)$ and $\mf{Var}[J(e, \esh)]$ terms. Therefore, by tuning the parameters of this payment function, we can set $\esh = e$ as a local maximum for the sensor's utility. We do so by matching terms based on the first and second derivatives to recover sufficient conditions for optimality. However, noting our reliance on differentiation, we are only able to guarantee local optimality.

We first show that  $\esh = e$ is a local optimal point for the term $\ps$ with respect to $\esh$.
\begin{lemma}
\label{lem:local_max}
Fix an effort level $e$. 
We have that $\esh = e$ is always a local maximum for $\esh \mapsto \ps$. 
\end{lemma}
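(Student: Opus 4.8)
The plan is to fix $e$, regard $\ps$ as a one-variable function of $\esh$, and show directly that $\esh = e$ maximizes it. Reading off the definition in Equation~\eqref{eq:pstar}, the terms $a$ and $b_e\sigma^2(e)$ do not depend on $\esh$, so maximizing $\ps$ over $\esh$ is the same as minimizing the single remaining term $b_J(\esh)\,(\sigma^2(e) - \sigma^2(\esh))^2$. Both factors of this product are nonnegative — $b_J$ by assumption, and the square trivially — so the product is nonnegative everywhere and equals zero exactly at $\esh = e$ (using that $\sigma^2(\cdot)$ is strictly decreasing, hence injective). Thus $\ps \le a - b_e\sigma^2(e) = \ps\big|_{\esh=e}$ for every $\esh$, which already certifies $\esh = e$ as a global maximizer, and in particular a local one. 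This nonnegativity argument is the cleanest route and does not even require differentiability.

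Because the surrounding discussion promises to match first- and second-derivative conditions when the $p_{var}$ term is later incorporated, I would also record the derivative computation as an equivalent second-order verification. Setting $d(\esh) = \sigma^2(e) - \sigma^2(\esh)$ so that $d(e) = 0$, I would differentiate $\phi(\esh) = b_J(\esh)\,d(\esh)^2$. Every term of $\phi'(\esh) = b_J'(\esh)\,d(\esh)^2 + 2\,b_J(\esh)\,d(\esh)\,d'(\esh)$ carries a factor of $d(\esh)$, so $\phi'(e) = 0$, confirming the first-order condition. Differentiating once more and evaluating at $\esh = e$, all terms containing $d(e)$ drop out and one is left with $\phi''(e) = 2\,b_J(e)\,\big((\sigma^2)'(e)\big)^2$. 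Since $\sigma^2(\cdot)$ is strictly decreasing we have $(\sigma^2)'(e) \neq 0$, so $\phi''(e) > 0$ whenever $b_J(e) > 0$; this is the second-order sufficient condition for $\phi$ to have a strict local minimum, equivalently for $\ps$ to have a strict local maximum, at $\esh = e$.

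The only place that needs care is the degenerate case $b_J(e) = 0$, where $\phi''(e) = 0$ and the second-derivative test is inconclusive. I expect this to be the sole obstacle, and it is dispatched immediately by the direct nonnegativity argument of the first paragraph, which continues to certify $\esh = e$ as a (global, non-strict) maximizer even when $b_J$ vanishes there. Accordingly, I would lead the write-up with the nonnegativity observation as the core of the proof, and present the derivative computation afterward as the bridge to the subsequent sufficient-condition analysis for the full payment function.
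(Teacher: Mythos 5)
Your proof is correct, and your lead argument is genuinely different from --- and more robust than --- the paper's. The paper proves the lemma purely by the second-derivative test: since $\sigma^2(e) - \sigma^2(\esh)$ vanishes at $\esh = e$, the first derivative of $\ps$ is zero there, and the only surviving term of the second derivative is $-2 b_J(\esh)\bigl(\tfrac{d\sigma^2(\esh)}{d\esh}\bigr)^2$, which the paper declares strictly negative; your second paragraph reproduces essentially this exact computation. Your first paragraph instead observes that $b_J(\esh)\,(\sigma^2(e)-\sigma^2(\esh))^2 \ge 0$ everywhere and vanishes at $\esh = e$, so $\esh = e$ is a global (hence local) maximizer of $\esh \mapsto \ps$. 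This buys two things the paper's argument does not: a stronger conclusion (global rather than merely local optimality) requiring no differentiability at all, and coverage of the degenerate case $b_J(e) = 0$, which you correctly flag and which the paper silently glosses over --- since $b_J$ is only assumed non-negative, the paper's strict inequality ``$< 0$'' can fail, leaving its second-derivative test inconclusive precisely there. Two small points to tighten in your write-up: the product $b_J(\esh)\,(\sigma^2(e)-\sigma^2(\esh))^2$ can also vanish at points where $b_J$ does, so ``equals zero exactly at $\esh = e$'' is an overstatement (harmless, since the maximality conclusion needs only non-negativity plus vanishing at $\esh = e$); and strict monotonicity of $\sigma^2$ alone does not give $(\sigma^2)'(e) \neq 0$ (consider $-x^3$ at the origin) --- it is the paper's standing convexity assumption on $\sigma^2$, combined with strict decrease, that forces the derivative to be strictly negative, and this is in fact why the paper's proof invokes convexity.
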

\begin{proof}
At $\esh = e$, the term $\sigma^2(e) - \sigma^2(\esh) = 0$. Thus, the first derivative is 0, and all but one term in the second partial derivative of $\ps$ disappears. Noting that $\sigma^2$ is convex, we have: 
\[
\begin{aligned}
\left.
\frac{\partial^2 p^{\star} }{\partial \esh^2} 
\right|_{\esh = e} &= -2 b_J(\esh)\left(\frac{d\sigma^2(\esh)}{d\esh}\right)^2 < 0
\end{aligned}
\]
Thus $\esh = e$ is a local maximum.
\end{proof}
Now we consider the $p_{var}$ term in (\ref{eq:expected_p}).

First, note that $\mf{Var}[J(e, \esh)]$ can be written as:
\[
\begin{aligned} &
\mf{Var}[J(e, \esh)] \\ & = 2\sum_{k=0}^{N}\operatorname{tr} (
\mf{\bar{Q}_k}
\mb{E}[\bar{x}_k \bar{x}^T_k] 
\mf{\bar{Q}_k}
\mb{E}[\bar{x}_k \bar{x}^T_k]) \\ & + \sum_{k=0}^{N} \sum_{j \neq k}^{N} \mf{Cov}\left( \bar{x}^T_k
\mf{\bar{Q}_k}
\bar{x}_k, \bar{x}^T_j
\mf{\bar{Q}_k}
\bar{x}_j\right)
\end{aligned}
\]
Note that this is the variance of a quadratic cost, so it contains fourth-order terms of the Gaussian variables. This keeps us from the simpler analysis that applies to first and second moments.

Additionally, $\mf{Var}[J(e, \esh)]$ is a function of $e$ and $\esh$ that is twice-differentiable. Because we are considering truthful reporting (i.e. $\esh = e$ should be optimal), we only care about the local properties around $\esh = e$. $\mf{Var}[J(e, \esh)]$ and its derivatives can be calculated ex-ante at point $\esh = e$. Therefore, with knowledge of $\mf{Var}[J(e, \esh)]$, we can force $\esh = e$ to be a local maximum by carefully choosing $b_J(\esh)$.
\begin{theorem}
\label{thm:variance}
Fix an effort level $e$. 
If the following conditions holds at $\esh = e$, then $\esh = e$ is the local maximum for $\mb{E}[p]$. 
\begin{enumerate}
    \item $\left. \frac{d^2b_J(\esh)}{d \esh^2}\right|_{\esh = e}$ is sufficiently large.
    \item The following equality holds at $\esh = e$:
\begin{equation}
\label{eq:conditions_Ep}
\begin{aligned}& 
\frac{db_J(\esh)}{d \hat{e}} \mf{Var}[J(e, \esh)] {f_2}^{-2}(\esh) \\ & - 2b_J(\esh) \mf{Var}[J(e, \esh)] {f_2}^{-3}(\esh) \frac{df_2(\esh)}{d \esh} \\ &
+ b_J(\esh) 
\frac{\partial\mf{Var}[J(e, \esh)]}{\partial\esh} {f_2}^{-2}(\esh) = 0
\end{aligned}
\end{equation}
\end{enumerate}
\end{theorem}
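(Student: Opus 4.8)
The plan is to exploit the additive split $\mb{E}[p] = \ps + p_{var}$ from Equation~\eqref{eq:expected_p} and check the first- and second-order optimality conditions at $\esh = e$ term by term, reusing Lemma~\ref{lem:local_max} for the $\ps$ piece. The two hypotheses of the theorem are engineered to be exactly the statements that the $p_{var}$ piece has vanishing first derivative and (together with $\ps$) negative second derivative at $\esh=e$, so the proof is mostly a matter of differentiating and matching.

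For the first-order condition, I would differentiate $\mb{E}[p]$ in $\esh$ and evaluate at $\esh=e$. The $\ps$ contribution is zero by Lemma~\ref{lem:local_max}. Writing $p_{var} = -b_J(\esh)\,f_2^{-2}(\esh)\,\mf{Var}[J(e,\esh)]$ and applying the product rule gives three terms carrying $b_J'(\esh)$, the derivative of $f_2^{-2}$, and $\partial_{\esh}\mf{Var}$ respectively; setting their sum to zero reproduces Equation~\eqref{eq:conditions_Ep} verbatim. Hence condition~(2) is precisely the assertion that $\partial_{\esh}p_{var}\big|_{\esh=e}=0$, and therefore $\partial_{\esh}\mb{E}[p]\big|_{\esh=e}=0$.

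For the second-order condition I would split $\partial^2_{\esh}\mb{E}[p] = \partial^2_{\esh}\ps + \partial^2_{\esh}p_{var}$. Lemma~\ref{lem:local_max} gives $\partial^2_{\esh}\ps\big|_{\esh=e} = -2 b_J(e)\left(\tfrac{d\sigma^2(e)}{de}\right)^2 < 0$. Differentiating $p_{var}$ twice yields $-b_J''(\esh)\,f_2^{-2}(\esh)\,\mf{Var} - 2 b_J'(\esh)\,\tfrac{d}{d\esh}\!\big(f_2^{-2}\mf{Var}\big) - b_J(\esh)\,\tfrac{d^2}{d\esh^2}\!\big(f_2^{-2}\mf{Var}\big)$. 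With $e$ fixed, every factor except $b_J''(e)$ evaluates to a finite constant, and the coefficient of $b_J''(e)$ is $-f_2^{-2}(e)\,\mf{Var}[J(e,e)] \le 0$ (strictly negative whenever the cost is genuinely random). Thus making $b_J''(e)$ large enough — condition~(1) — forces $\partial^2_{\esh}p_{var}\big|_{\esh=e}$ to be as negative as needed, and adding the strictly negative $\partial^2_{\esh}\ps$ term gives $\partial^2_{\esh}\mb{E}[p]\big|_{\esh=e}<0$, so $\esh=e$ is a strict local maximum.

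The main obstacle is the quantity $\mf{Var}[J(e,\esh)]$, which (as the text notes) is a variance of a quadratic in Gaussians and hence a fourth-order moment with no convenient closed form. The trick that avoids grinding through it is that the argument only requires $\mf{Var}[J(e,\esh)]$ and its first two $\esh$-derivatives to be finite at the single fixed point $\esh=e$: condition~(2) then absorbs the entire first-order contribution, and the free parameter $b_J''(e)$ in condition~(1) dominates the second-order contribution regardless of the (finite) values of those moments. I would finish by remarking that since both conditions constrain only the values $b_J(e)$, $b_J'(e)$, $b_J''(e)$ at one point, a non-negative, convex, twice continuously differentiable $b_J$ meeting them always exists.
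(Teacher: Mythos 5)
Your proof follows essentially the same route as the paper's: split $\mb{E}[p]$ into $\ps + p_{var}$ via Equation~\eqref{eq:expected_p}, invoke Lemma~\ref{lem:local_max} for the $\ps$ piece, identify condition~(2) as the first-order condition $\partial_{\esh} p_{var}\big|_{\esh=e}=0$, and use the fact that $b_J''(e)$ multiplies $-f_2^{-2}(e)\,\mf{Var}[J(e,e)]$ in the second derivative so that condition~(1) forces the overall second derivative negative. Your write-up is in fact more careful than the paper's terse argument (you make explicit the sign of the $b_J''$ coefficient and the finiteness requirements on $\mf{Var}[J(e,\esh)]$ and its derivatives), but the underlying idea is identical.
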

\begin{proof}
We have shown that $\esh = e$ is a local maximum for $\ps$ in Lemma~\ref{lem:local_max}. Differentiating \eqref{eq:expected_p},  $\esh = e$ is a local maximum for $\mf{E}\left[p(J(e, \esh), \Jesh)\right]$ if it is also a local maximum for $p_{var}$.
$$
\frac{\partial^2p_{var}}{\partial \esh^2} = - \frac{d^2b_J(\esh)}{d\hat{e}^2} \mf{Var}[J(e, \esh)] {f_2}^{-2}(\esh) + \text{other terms}
$$
Therefore, $\frac{d^2p}{d\hat{e}^2_s}$ is negative if $\frac{d^2b_J(\esh)}{d\hat{e}^2_s}$ is sufficiently large at $\esh = e$.
The Equation~\eqref{eq:conditions_Ep} is actually the first-order optimality condition: $\frac{dp}{d\hat{e}_s} = 0$. 
\end{proof}
Therefore, by choosing the parameters of $p$ appropriately, payment functions of the form in Equation~\eqref{eq:def_p} can incentivize truthful reporting on the part of strategic sensors. When the strategic sensor truthful reports, the LQG controller described in Section~\ref{sec:sys_op} is optimal, in the sense that it minimizes the expected cost, given the noise covariances.

\section{Conclusion}
\label{sec:conclusion}


In this paper we analyzed the effort-averse strategic behavior of a single type of strategic sensors on the expected quadratic cost of a linear system with Gaussian noise. We studied how the expected cost $\mb{E}[J]$ varies with both the true effort $e$, which determines the measurement noise, and the reported effort $\esh$, which influences the gains of the observer and controller. Surprisingly, we can decompose $\mb{E}[J]$ into terms that depend soley on $e$ and $\esh$. This allows us to learn how the expected cost varies with both of these parameters, and, additionally, allows us to define payment contracts that can incentivize truthful reporting in expectation. We show that payment contracts typically used in static settings fail to incentivize truthful reporting in general, and then modify the mechanism using values that can be calculated a priori, which incorporate the closed-loop effect of reported data into the payment itself.

We view this work as the first step into understanding how to incentivize strategic data sources in dynamic settings. In particular, we show that the mechanisms designed in the static setting will often not work in dynamical settings due to the closed-loop behavior of the system. Additionally, many parameters from the closed-loop system likely must be incorporated in order to achieve desirable properties, such as truthful reporting. In future work, we hope to consider how the system operator behaves when the payments are costly: how does the system operator trade-off the cost of issuing payments with the system's operational costs?




\bibliographystyle{IEEEtran}
\bibliography{./Yuelin-refs}


\end{document}